\newtheorem{theorem}{Theorem}
\newtheorem{corollary}{Corollary}
\newcommand{\age}{\Delta}
\newcommand{\negfigspace}{\vspace{0mm}}
\newcommand{\cf}{c}
\newcommand{\wqr}{k}
\newcommand{\wset}{\mathcal{K}}
\begin{document}
\title{\vspace{3.5ex}Status Updates Through Multicast Networks}
\author{Jing Zhong, Emina Soljanin and Roy D.~Yates \\
Department of ECE, Rutgers University, \{jing.zhong, emina.soljanin, ryates\}@rutgers.edu}


\maketitle

\begin{abstract}
Using age of information as the freshness metric, we examine a multicast network in which real-time status updates are generated by the source and sent to a group of $n$ interested receivers.
We show that in order to keep the information freshness at each receiver, the source should terminate the transmission of the current update and start sending a new update packet as soon as it receives the acknowledgements back from any $\wqr$ out of $n$ nodes. 
As the source stopping threshold $\wqr$ increases, a node is more likely to get the latest generated update, but the age of the most recent update is more likely to become outdated.
We derive the age minimized stopping threshold $\wqr$ that balances the likelihood of getting the latest update and the freshness of the latest update for shifted exponential link delay. 
Through numerical evaluations for different stopping strategies, we find that waiting for the acknowledgements from the earliest $\wqr$ out of $n$ nodes leads to lower average age than waiting for a pre-selected group of $\wqr$ nodes.
We also observe that a properly chosen threshold $\wqr$ can prevent information staleness for increasing number of nodes $n$ in the multicast network. 
\end{abstract}

\section{Introduction}

Recent advance in pervasive connectivity and ubiquitous computing has engendered many applications that requires real-time information updates, including environmental sensor networks monitoring temperature and humidity, and connected vehicular network by self-driving cars.
These applications share a common need: the freshness of the data at the interested recipients has to be maximized. 

Recent works on status updating system has been focused on the analysis of a new  ``Age of Information" (AoI) timeliness metric  \cite{Kaul2012infocom,Costa2014,Huang2015,Sun2016,Najm2017,Bedewy2016,Kadota2016,Yates2017}.
In most status updating systems, a source generates time-stamped status update messages that are transmitted through a communication system to a receiver. 
Age of information, or simply \emph{age}, measures the time difference between now and when the most recent update was generated.
If the receiver receives an update at some time $t$, and the update packet was generated at time $u(t)$, then the instantaneous age of the update packet is $t-u(t)$.

In next-generation wireless networks with pervasive computing and dense IoT deployments, most delay-sensitive data will be popular and simultaneously requested by large numbers of users.
For example, the information updates of an autonomous car will be broadcast to all nearby vehicles and passengers equipped with portable sensing devices, and the live video captured by a sports motion camera may be streamed to all portable devices in a stadium. 
In order to better utilize the radio spectrum and reduce the cost of resources, the 3GPP proposes to deliver multimedia content to the mobile devices using the evolved Multimedia Broadcast and Multicast Service (eMBMS) \cite{Lecompte2012,Lentisco}. 
For time-sensitive data gathering and monitoring systems consisting of multiple recipients, we are interested in the problem of how should the source send update messages through multicast or broadcast channels such that the age of information at each receiver can be minimized. 

Although the age of information 
has been characterized in a variety of contexts by recent works, the study of status updating through multicast or broadcast networks is very limited.
In \cite{Kadota2016}, it was shown that a greedy scheduling policy that prioritize the highest age packet provides optimal weighted sum of age over all users for wireless broadcast networks.
In \cite{Yates2017}, two different hybrid ARQ schemes were compared for status updating over multiple binary erasure channels, and showed that the average age at each user depends on the total number of users in the system.

In this work, we examine status updating systems in which real-time status update messages generated by the source are sent to a set of nodes through multicast channels with random network delays.
Once a node receives an update message, it acknowledges the source by sending a response through an instantaneous feedback channel.
Assuming that the source can terminate the transmission of the current update at any time and start the transmission of the next update, 
we ask the following question: how long should the source wait until it starts the next update in a multicast system with $n$ nodes?
We start by considering a simple strategy that the source terminates the current update and sends out a new update as soon as it receives instantaneous responses from the earliest $\wqr$ nodes.
Intuitively, if the source waits for more responses, each node is more likely to get an update, but the update message is also more likely to be outdated. 
We show that the optimal stopping threshold $\wqr$, which balances the likelihood of delivering an update and the freshness of the most recent update, depends on the order statistics of the delay distributions in the multicast network.
We also note that the model we consider is relevant to many systems, e.g., coded multicast, where single source transmits coded file packets to multiple servers  \cite{HeindlmaierS14}.
In these schemes, it is often required that each packet reaches a certain minimum number of users to ensure that the packet loss that each user experiences is recoverable by the code.

In Sec.~\ref{sec:system}, we formulate the system model of updating through multicast channels.
We then analyze the time-averaged age at each node in Sec.~\ref{sec:single}. We show that for i.i.d. shifted-exponential delay at each link, an approximate average age can be obtained by exploiting the logarithmic approximation of the order statistics, and there exists an optimal stopping threshold $\wqr$ that minimizes the average age at each node.
We also investigate another stopping scheme in which the source selects a group of $\wqr$ nodes randomly in advance, and only waits for the responses from this group of $\wqr$ nodes.
The numerical evaluations in Sec.~\ref{sec:evaluation} demonstrates how the choice of threshold $\wqr$ affects the average age for different distributions, and proves the tightness of our approximation.
We conclude by summarizing our results and discussing possible future work in Sec.~\ref{sec:conclusion}.

\section{System Model and Metric}\label{sec:system}
\begin{figure}[t]
\centering\small
\begin{tikzpicture}[node distance=1cm]
\node [draw,circle, rounded corners,align=center] (newsource) {Source};
\node[draw,rectangle,rounded corners,align=center,minimum height=4mm,thick] (node_2)[right = 4 of newsource] {Node 2};
\node[draw,rectangle,rounded corners,align=center,minimum height=4mm,thick] (node_1)[above = 0.1 of node_2] {Node 1};
\node[draw,rectangle,rounded corners,align=center,minimum height=4mm,thick] (node_3)[below = 0.1 of node_2] {Node 3};
\node[draw,rectangle,rounded corners,align=center,minimum height=4mm,thick] (node_n)[below = 1.2 of node_2] {Node n};
\draw[->,thick] (newsource.east) -- node[draw,rectangle,minimum height=3mm,thin][above left]{$j$+1} ++(1.5,0) |-  (node_1.west) node[draw,rectangle,minimum height=3mm,thin] [above left = 0 and 0.2]{$j$};
\draw[->,thick] (newsource.east) -- ++(1.5,0) |- (node_3.west)node[draw,rectangle,minimum height=3mm,thin,opacity=.6] [above left = 0 and 0.7]{$j$};
\draw[->,thick] (newsource.east) -- (node_2.west) node[draw,rectangle,minimum height=3mm,thin,opacity=.6] [above left = 0 and 1]{$j$}; 
\draw[->,thick] (newsource.east) -- ++(1.5,0) |- (node_n.west) node[draw,rectangle,minimum height=3mm,thin] [above left = 0 and 0.3]{$j$};
\path (node_3.south) -- node {$\cdots$}  (node_n.north);
\draw[red,dashed,rounded corners] (4.5,1.2) node [right] {$\wqr$ nodes have update $j$} rectangle (3,-1.8) ; 

\end{tikzpicture}
\caption{Source broadcasts status updates to multiple nodes through i.i.d. channels. The transmission of update $j+1$ is initiated after update $j$ is delivered to $\wqr$ out of $n$ nodes.}
\label{fig:sysmodel}
\negfigspace
\end{figure}
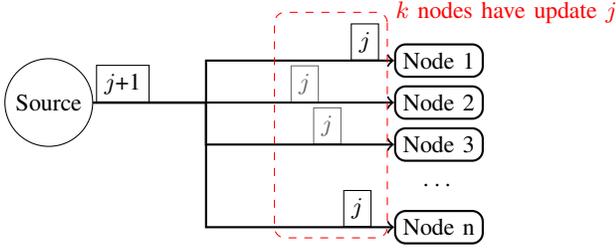

We consider a system with a single source broadcasting time-stamped updates to $n$ nodes through $n$ links with independent random delays, as shown in Fig.~\ref{fig:sysmodel}.
Each update is time-stamped when it was generated.
An update takes time $X_i$ to reach node $i$. We refer to $X_i$ as the service time of link $i$.
We assume that the $X_i$ are i.i.d. shifted exponential $(\lambda,c)$ random variables. Consequently, each $X_i$ has CDF $F_{X}(x)~=~1-e^{-\lambda(x-c)}$, for $x\geq c$.
The constant time shift $c>0$ captures the delay produced by the update generation and assembly process.
On the other hand, it could also be a propagation delay on top of an exponential network delay if the source and database are geographically separated.

When most recently received update at time $t$ at node $i$ is time-stamped at time $u_i(t)$, the status update age or simply the {\it age}, is the random process $\age_i(t)=t-u_i(t)$. 
When an update reaches node $i$, $u_i(t)$ is advanced to the timestamp of the new update message and the node sends an acknowledgement to the source through an instantaneous feedback channel. 
When the earliest $\wqr$ out of $n$ nodes report receiving the update $j$, we say that update $j$ has been completed. 
At this time, the transmissions of all $n-\wqr$ remaining replicas are terminated. 
Following the completion of update $j$, we could choose to insert a waiting time before starting the transmission of update $j+1$ as it has been shown that a nonzero wait can reduce the age in  certain single node systems \cite{Sun2016}.  However, in this work we restrict our attention to zero-wait policies in which update $j+1$ begins transmission immediately following the completion time of update $j$. 
We refer to this stopping scheme as the \emph{earliest~k}.
Alternatively, the source can choose the $\wqr$ out of $n$ nodes as a group in advance, and only wait for the acknowledgements from this group of nodes. We call this stopping scheme the \emph{pre-selected~k}.
Note that $\wqr=n$ in the special case that the source starts the transmission of the next update only if the current update is delivered to all $n$ nodes, which is called \emph{wait-for-all}.

The time average of age process is defined as the age of information
\begin{align}
\age = \lim_{\tau\to\infty} \frac{1}{\tau} \int_{0}^{\tau} \age(t).
\end{align}
We will use a graphical argument similar to that in \cite{Yates2017} to evaluate the average age at an individual node and at the client.
Fig.~\ref{fig:sawtooth_n_1} depicts a sample path of the age over time at some node $i$ in a system of $n$ nodes.
Update $1$ begins transmission at time $t=0$ and is timestamped $T_0=0$. 
Here we define $X_{ij}$ as service time to deliver the update $j$ to node $i$.
Since $\{X_{ij}\}$ are i.i.d. for all $i$ and $j$, the $\age_i(t)$ processes are statistically identical and each node $i$ will have the same average age $\age_i$.

\section{Age Analysis} \label{sec:single}

\begin{figure}[t]
\centering
\begin{tikzpicture}[scale=0.25]
\draw [fill=lightgray, ultra thin, dashed] (0,0) to (0,4) to (3,7) to (3,3) to (7,7) to (7,0);
\draw [fill=lightgray, ultra thin, dashed] (17,0) to (17,6) to (20,9) to (20,3) to (23,6) to (23,0);
\draw [<-|] (0,12) node [above] {$\age_{(n)}(t)$} -- (0,0) -- (13,0);
\draw [|->] (14,0) -- (28,0) node [right] {$t$};
\draw 
(3,0) node {$\bullet$}  
(7,0) node [below] {$T_1$}
(11,0) node {$\bullet$}
(17,0) node [below] {$T_{j-1}$} 
(20,0) node {$\bullet$} 
(23,0) node [below] {$T_{j}$}
(25,0) node {$\bullet$};
\draw  [<-] (3,2) to [out=60,in=290] (5,7) node [above] {$A_1$};
\draw[<-] (20,2) to [out=60,in=280] (22,7) node [above] {$A_{j}$};
\draw [very thick] (0,4) -- (3,7) -- (3,3)  -- (7,7) 
-- (11,11)-- (11,4) --(13,6);
\draw [very thick] (14,3) -- (20,9) -- (20,3)  -- (25,8) -- (25,5) -- (28,8); 
\draw  [|<->|] (0,10) to node [above] {$Y_{1}$} (7,10);
\draw  [|<->|] (17,10) to node [above] {$Y_{j}$} (23,10);
\draw  [|<->|] (0,-2.5) to node [below] {$X_{i1}$} (3,-2.5);
\draw  [|<->|] (7,-2.5) to node [below] {$X_{i2}$} (11,-2.5);
\draw  [|<->|] (17,-2.5) to node [below] {$X_{ij}$} (20,-2.5);
\end{tikzpicture}
\caption{Sample path of the age $\age_{(n)}(t)$ for node $i$ with $n$ nodes. Update delivery instances are marked by $\bullet$.}
\label{fig:sawtooth_n_1}\negfigspace
\end{figure}
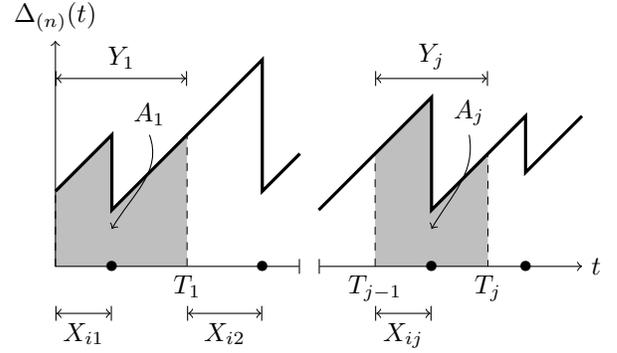  
\subsection{Order Statistics Notation}
We denote the $\wqr$-th order statistic of
random variables $X_1, \ldots, X_n$, i.e., the $\wqr$-th smallest variable, as $X_{k:n}$.
For shifted exponential $X$, the expectation and variance of the order statistic $X_{k:n}$ are given by 
\begin{align}
\E{X_{k:n}} & = \cf+\frac{1}{\lambda}(H_n-H_{n-k}) \label{os_first} \\
\Var{X_{k:n}} & = \frac{1}{\lambda^2}\left(H_{n^2}-H_{(n-k)^2}\right) \label{os_var}
\end{align}
where $H_n$ and $H_{n^2}$ are the generalized harmonic numbers defined as $H_n = \sum_{j=1}^{n}\frac{1}{j}$ and $H_{n^2} = \sum_{j=1}^{n}\frac{1}{j^2}$. 
Thus the second moment of the order statistic is 
\begin{align}
\E{X_{k:n}^2} & = \cf^2+\frac{2c}{\lambda}(H_n-H_{n-k}) \nn
& \quad + \frac{1}{\lambda^2}\left((H_n-H_{n-k})^2 + H_{n^2}-H_{(n-k)^2}\right). \label{os_second}
\end{align}
\subsection{Wait-for-all Scheme}
We first consider the special case where the source waits for the receiving acknowledgements from all $\wqr=n$ nodes; that is, the delivery of an update to all $n$ nodes is guaranteed. 
If one node gets an update earlier than any of the other $n-1$ nodes, it has to wait for an idle period until that the update is delivered to all $n$ nodes.
The transmission time of an update $j$ to all $n$ nodes, which we call a \emph{service interval}, is given by
\begin{align}
Y_j&=\max(X_{1j}, \ldots, X_{nj}).
\end{align}
Note that $Y_j$ is identical to $X_{n:n}$.
We denote that update $j$ goes into service at time $T_{j-1}$ with time stamp $T_{j-1}$ and get delivered to all $n$ nodes at time $T_j = T_{j-1} + Y_j$.
As shown in Figure \ref{fig:sawtooth_n_1}, the age $\age_i(t)$ drops when update $j$ is delivered to node $i$ at an earlier time $T_{j-1}+X_{ij}$. 
This implies at time $T_j$ when update $j$ completes transmission to all nodes, the age at monitor $i$ is $\age_i(T_i)=Y_j$.
We define $\age_{(n)}$ as the average age $\age_i$ at some node $i$ for the wait-for-all scheme with $n$ nodes in total.
We represent the area under the age sawtooth as the concatenation of the polygons $A_1,\ldots,A_j$,  thus the average age is
$\age_{(n)}=\E{A}/\E{Y}$.
Fig.~\ref{fig:sawtooth_n_1} shows that
\begin{align}
A_i&=Y_{j-1}X_{ij} +X_{ij}^2/2 +X_{ij}(Y_j-X_{ij})+(Y_j-X_{ij})^2/2\nn
&=Y_{j-1}X_{ij}+Y_j^2/2.
\end{align}
Since $X_{ij}$ is independent of the transmission time $Y_{j-1}$ of the previous update, $\E{A}=\E{Y}\E{X}+\E{Y^2}/2$ and we have the following theorem
\begin{theorem} For wait-for-all stopping scheme with $n$ nodes, the average age at an individual node is 
\begin{align*}
\age_{(n)} & = \E{X} + \frac{\E{X_{n:n}^2}}{2\E{X_{n:n}}}.
\end{align*}
\label{thm:age_n1}
\end{theorem}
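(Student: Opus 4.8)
The plan is to express the time-average age as a ratio of per-update expectations and then substitute order-statistic moments. Since the service intervals $Y_j = \max(X_{1j},\ldots,X_{nj}) = X_{n:n}$ are i.i.d.\ across updates $j$, the sawtooth of Fig.~\ref{fig:sawtooth_n_1} admits the renewal-type averaging already noted in the text, $\age_{(n)} = \E{A}/\E{Y}$, where $A$ is the area of one polygon under the sawtooth and $Y$ one service interval. The remaining work is to evaluate $\E{A}$ and $\E{Y}$ and simplify.

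First I would establish the per-update area. Over the $j$-th service interval the age contribution at node $i$ splits into the rectangle of age carried in from the previous interval, $Y_{j-1}X_{ij}$, and the two triangular regions built up and torn down within the interval. The excerpt's algebra shows these triangular pieces collapse to $Y_j^2/2$, giving $A = Y_{j-1}X_{ij} + Y_j^2/2$, which I would take as the geometric heart of the argument.

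Next I would take expectations term by term. The single subtle point is the cross term $\E{Y_{j-1}X_{ij}}$: because $X_{ij}$ is a delay of update $j$ while $Y_{j-1} = \max(X_{1,j-1},\ldots,X_{n,j-1})$ depends only on delays of update $j-1$, and all link delays are i.i.d.\ across updates and nodes, the two factors are independent, so $\E{Y_{j-1}X_{ij}} = \E{Y}\E{X}$. Together with $\E{Y_j^2/2} = \E{Y^2}/2$ this yields $\E{A} = \E{Y}\E{X} + \E{Y^2}/2$.

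Finally I would divide by $\E{Y}$ to obtain $\age_{(n)} = \E{X} + \E{Y^2}/(2\E{Y})$, and substitute $Y = X_{n:n}$ to reach the claimed $\age_{(n)} = \E{X} + \E{X_{n:n}^2}/(2\E{X_{n:n}})$; if desired, $\E{X_{n:n}}$ and $\E{X_{n:n}^2}$ can then be read off from \eqref{os_first} and \eqref{os_second}. I expect the main obstacle to lie not in this last step but in rigorously justifying the two ingredients taken from the text: the geometric decomposition of $A$ (correctly tracking the stair-step drop as update $j$ reaches node $i$ at time $T_{j-1}+X_{ij}$ and the age carried across the interval boundary), and the ratio formula itself. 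The latter deserves care because the polygon area depends on \emph{two} consecutive intervals through the carried-in rectangle $Y_{j-1}X_{ij}$, so this is not a textbook renewal-reward with reward a function of a single cycle; one must argue that boundary contributions vanish as $\tau \to \infty$ and that the stationary cross-interval expectation $\E{Y_{j-1}X_{ij}}$ is what survives in the limit.
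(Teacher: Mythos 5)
Your proposal follows the paper's own proof exactly: the same graphical decomposition $A = Y_{j-1}X_{ij} + Y_j^2/2$, the same independence argument giving $\E{A} = \E{Y}\E{X} + \E{Y^2}/2$, and the same ratio $\age_{(n)} = \E{A}/\E{Y}$ with $Y = X_{n:n}$. Your extra remark that the polygon area spans two consecutive service intervals, so the renewal-reward step needs the stationary cross-term and vanishing boundary contributions, is a valid refinement the paper leaves implicit, but the route is the same.
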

We remark that Theorem \ref{thm:age_n1} holds for any distribution of $X$ if zero-wait policy is used and the update delivery acknowledgement is instantaneous.
\begin{corollary} For shifted exponential $(\lambda,c)$ service time $X$, the average age at an individual node for wait-for-all stopping scheme is given by
\begin{align*}
\age_{(n)} & = \frac{3c}{2} + \frac{1}{\lambda} + \frac{H_n}{2\lambda} + \frac{H_{n^2}}{2\lambda^2\cf+2\lambda H_n}.
\end{align*}  \label{thm:age_n1_exp}
\end{corollary}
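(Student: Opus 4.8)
The plan is to specialize Theorem~\ref{thm:age_n1} to shifted exponential service by substituting the explicit moments of the maximum order statistic $X_{n:n}$. First I would record $\E{X}=\cf+1/\lambda$, the mean of the shifted exponential. Setting $\wqr=n$ in \eqref{os_first} and \eqref{os_second} and using $H_{n-k}=H_0=0$ together with $H_{(n-k)^2}=0$, I obtain $\E{X_{n:n}}=\cf+H_n/\lambda$ and $\E{X_{n:n}^2}=\cf^2+\frac{2\cf}{\lambda}H_n+\frac{1}{\lambda^2}\left(H_n^2+H_{n^2}\right)$. These are the only two quantities appearing in the age formula of Theorem~\ref{thm:age_n1}.

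The one observation that keeps the algebra clean is that the three terms of $\E{X_{n:n}^2}$ not involving $H_{n^2}$ are exactly the expansion of $\left(\cf+H_n/\lambda\right)^2$, i.e.\ of $\left(\E{X_{n:n}}\right)^2$. Hence $\E{X_{n:n}^2}=\left(\E{X_{n:n}}\right)^2+H_{n^2}/\lambda^2$, and the ratio in Theorem~\ref{thm:age_n1} splits cleanly as $\frac{\E{X_{n:n}^2}}{2\E{X_{n:n}}}=\frac{\E{X_{n:n}}}{2}+\frac{H_{n^2}/\lambda^2}{2\E{X_{n:n}}}$. The first piece contributes $\cf/2+H_n/(2\lambda)$, and the second, after clearing the $\lambda^2$, becomes $H_{n^2}/(2\lambda^2\cf+2\lambda H_n)$.

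Adding $\E{X}=\cf+1/\lambda$ to these two pieces and combining the constant-$\cf$ terms ($\cf+\cf/2=3\cf/2$) yields exactly the claimed expression $\age_{(n)}=\frac{3\cf}{2}+\frac{1}{\lambda}+\frac{H_n}{2\lambda}+\frac{H_{n^2}}{2\lambda^2\cf+2\lambda H_n}$. There is no genuine obstacle here; the statement is a direct substitution-and-simplify from the distribution-free Theorem~\ref{thm:age_n1}. The only step worth flagging is the perfect-square recognition in the numerator of the second moment, which avoids carrying the cross term $2\cf H_n/\lambda$ through a messy common-denominator computation; brute-forcing the same ratio over the denominator $2\E{X_{n:n}}$ also works but invites arithmetic slips.
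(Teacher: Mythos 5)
Your proof is correct and matches the paper's approach exactly: the paper's entire argument is the one-line remark that the corollary ``follows by substituting (\ref{os_first}) and (\ref{os_second}) into Theorem~\ref{thm:age_n1},'' and you have simply carried out that substitution in detail, with the correct specializations $H_{n-k}=H_{0}=0$ and $H_{(n-k)^2}=0$ at $\wqr=n$. Your perfect-square observation is just the mean--variance decomposition $\E{X_{n:n}^2}=(\E{X_{n:n}})^2+\Var{X_{n:n}}$ already implicit in \eqref{os_var} and \eqref{os_second}, so it is a tidier bookkeeping of the same computation rather than a different route.
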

Corollary \ref{thm:age_n1_exp} follows by substituting (\ref{os_first}) and (\ref{os_second}) into Theorem~\ref{thm:age_n1}. 
Note that $H_{n^2} = \sum_{i=1}^{n} \frac{1}{n^2}$ is monotonically increasing for $n\in\mathbb{Z^+}$ and $\lim_{n\to\infty} H_{n^2}=\pi^2/6$. Thus, given $\lambda$ and $\cf$,
\begin{align*}
\lim_{n\to\infty}~\frac{H_{n^2}}{2\lambda^2\cf+2\lambda H_n}~\to~0,
\end{align*}
and the average age can be approximated by 
\begin{align}
\age_{(n)} \approx \frac{3c}{2} + \frac{1}{\lambda} + \frac{H_n}{2\lambda}.
\end{align}
When $n$ is large, we can further approximate the harmonic number by $H_n\approx\log n+\gamma$ where $\gamma\approx 0.577$ is the Euler-Mascheroni constant. This implies that the average age with wait-for-all stopping strategy $\age_{(n)}$ behaves similarly to a logarithmic function as the number of nodes $n$ increases. 

\begin{figure}[t]
\centering
\begin{tikzpicture}[scale=0.095]
\draw [<-|] (0,37) node [above] {$\age_{(\wqr)}(t)$} -- (0,0) -- (15,0);
\draw [|->] (18,0) -- (80,0) node [right] {$t$};
\draw (0,0) (11,0) -- +(0,-1) (30,0) -- +(0,-1) (40,0) -- +(0,-1) (48,0) -- +(0,-1) (60,0) -- +(0,-1) (70,0) -- +(0,-1) ;
\fill[lightgray] (30,0) to ++(10,10) to ++(25,25) to ++(0,-30)  to ++(-5,-5);
\draw (7,0) node {$\bullet$}
(25,0) node {$\bullet$}
(38,0) node {$\bullet$}
(65,0) node {$\bullet$};
\draw (11,-6) node [above] {$T_1$} 
(30,-6) node [above] {$T_{j-1}$} (40,-6) node [above] {$T_{j}$} (48,-6) node [above] {$T_{j+1}$} (60,-6) node [above] {$T_{j+2}$} (70,-6) node [above] {$T_{j+3}$};
\draw [very thick] (0,10) -- ++(7,7) -- ++(0,-10) -- ++(8,8);
\draw [very thick]  (18,18) -- ++(7,7)  -- ++(0,-20) -- ++(7,7) -- (35,15) -- ++(3,3) -- ++(0,-10) -- ++(2,2) -- ++(25,25) -- ++(0,-30) -- ++(5,5);
\draw  [|<->|] (30,7) to node [below] {$Y_{j}$} (40,7);
\draw  [|<->|] (40,7) to node [below] {$Y_{j+1}$} (48,7);
\draw  [|<->|] (48,7) to node [below] {$Y_{j+2}$} (60,7);
\draw [|<-] (60,20) to (57,20);
\draw [|<-] (65,20) to (68,20) node [right] {$\Xtil$};
\draw  [|<->|] (72,0) to node [right] {$\Xtil$} (72,5);
\end{tikzpicture}
\caption{\small Sample path of the age $\age_{(\wqr)}(t)$: successful update deliveries (at times marked by $\bullet$) occur in intervals $1$, $j-1$, $j$, and $j+3$. Updates are preempted in intervals $j+1$ and $j+2$.}
\label{fig:sawtooth_w_1}
\negfigspace
\end{figure}
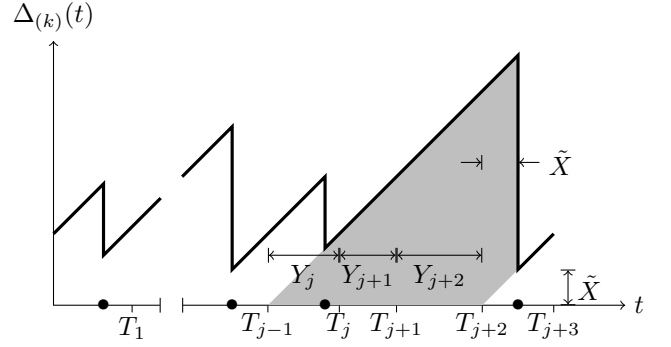
\subsection{Stopping at Earliest $\wqr$}
For earliest $\wqr$ stopping scheme, the source preempts the current update in service with a new update after the delivery of the current update to the earliest $\wqr$ out of $n$ nodes, where $\wqr\in\{1,2,\ldots, n\}$. 
In this case, the transmission time of an update $j$ to $\wqr$ nodes is $Y_j=X_{\wqr:n}$.
Denote the set of the $\wqr$ out of $n$ link with smallest service times as $\wset$.
Since the $X_i$ are i.i.d., the probability that link $i$ is the set $\wset$ is 
\begin{align}
p = \Pr[i\in\wset] = \frac{\wqr}{n}.
\end{align}
Otherwise, if update $j$ is not delivered to a node $i$, the node waits for time $Y_j$ until the source generates the next update, and we refer to this random waiting time as a {\it service interval} for $\wqr$ nodes.
Suppose an update is delivered to node $i$ during service interval $j$ and the next successful update delivery to node $i$ is in service interval $j+M$. 
In this case, $M$ is a geometric r.v. with probability mass function (PMF) 
$P_M(m) = (1-p)^{m-1} p$, and first and second moments
\begin{align}
\E{M} & = \frac{1}{p} = \frac{n}{\wqr}, \nn
\E{M^2} & = \frac{2-p}{p^2} = \frac{2n^2}{\wqr^2} - \frac{n}{\wqr}. \label{eqn:EM}
\end{align}
We remark that $M$ and $Y_j$ are independent. 
An example of the age process is shown in Figure ~\ref{fig:sawtooth_w_1}. The update $j$ is delivered in service interval $j$ with end time $T_j$, and the node $i$ waits for $M=3$ service intervals until the next successful delivery in interval $j+3$. 
In Figure \ref{fig:sawtooth_w_1}, we represent each shaded trapezoid area as $A$ and the average age for earliest $\wqr$ stopping scheme with stopping threshold $\wqr$ is 
\begin{align}
\age_{(\wqr)} = \frac{\E{A}}{\E{M}\E{Y}}. \label{eqn:age_raw_w1}
\end{align}
Denote the random variable $\Xtil$ as the service time of a successful update delivered to some node $i$, we have $\text{E}\bigl[\Xtil\bigr]~=~\E{X_i|i\in\wset}$. 
Evaluating Fig.~\ref{fig:sawtooth_w_1} gives the area 
\begin{align}
	A_k = \frac{1}{2} \Bigl(\sum_{l=j}^{j+M_k-1} Y_l + \Xtil_{k}\Bigr)^2 - \frac{1}{2}\Xtil_{k}^2. \label{eqn:Ak}
\end{align}
Defining  $W=\sum_{l=j}^{j+M_k-1} Y_l$, the expected area of trapezoid
\begin{align}
\E{A} & = \frac{\E{W^2}}{2} + \E{W}\E{\Xtil}. \label{eqn:EA_w1}
\end{align}
Since $M$ and $Y$ are independent, $\E{W} = \E{M}\E{Y}$ and 
\begin{align}
	\E{W^2} = & = \E{M^2} (\E{Y})^2 + \E{M}\Var{Y}. \label{eqn:EW2}
\end{align}
Thus we have the following theorem
\begin{theorem}
For earliest $\wqr$ stopping scheme, the average age at an individual node is 
\footnote{This version is different from the published version in Allerton Conference 2017. An error correction has been made here based on the feedback from Prof.~Sennur~Ulukus at University of Maryland.}
\begin{align*}
\age_{(\wqr)} & = \frac{1}{\wqr}\sum_{i=1}^{\wqr} \E{X_{i:n}} + \frac{2n-\wqr}{2\wqr} \E{X_{\wqr:n}} + \frac{\Var{X_{\wqr:n}}}{2\E{X_{\wqr:n}}}. 
\end{align*}
\label{thm:age_w1}
\end{theorem}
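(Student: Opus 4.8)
The plan is to assemble $\age_{(\wqr)}$ directly from the quantities already computed in~\eqref{eqn:age_raw_w1}, \eqref{eqn:EA_w1} and \eqref{eqn:EW2}, and then to identify the conditional mean $\E{\Xtil}$ with an average of order statistics.

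First I would substitute $\E{A} = \E{W^2}/2 + \E{W}\E{\Xtil}$ into $\age_{(\wqr)} = \E{A}/(\E{M}\E{Y})$, and use $\E{W} = \E{M}\E{Y}$ together with $\E{W^2} = \E{M^2}(\E{Y})^2 + \E{M}\Var{Y}$. After cancelling the common factor $\E{M}\E{Y}$, the expression splits cleanly into three terms,
\begin{align*}
\age_{(\wqr)} = \frac{\E{M^2}}{2\E{M}}\,\E{Y} + \frac{\Var{Y}}{2\E{Y}} + \E{\Xtil}.
\end{align*}
Using the geometric moments~\eqref{eqn:EM}, the leading coefficient reduces to $\E{M^2}/(2\E{M}) = (2n-\wqr)/(2\wqr)$, and since $Y = X_{\wqr:n}$ the first two terms are exactly $\frac{2n-\wqr}{2\wqr}\E{X_{\wqr:n}}$ and $\frac{\Var{X_{\wqr:n}}}{2\E{X_{\wqr:n}}}$, matching the last two terms of the claim.

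The main obstacle, and the only step beyond bookkeeping, is to show $\E{\Xtil} = \E{X_i \mid i\in\wset} = \frac{1}{\wqr}\sum_{r=1}^{\wqr}\E{X_{r:n}}$. The key identity is that the total service time of the $\wqr$ fastest links equals the sum of the $\wqr$ smallest order statistics, $\sum_{i\in\wset} X_i = \sum_{r=1}^{\wqr} X_{r:n}$. Taking expectations and decomposing the left side node by node, $\E{\sum_{i\in\wset} X_i} = \sum_{i=1}^{n}\E{X_i\mid i\in\wset}\,\Pr[i\in\wset]$. Because the links are i.i.d., $\E{X_i\mid i\in\wset}$ takes the common value $\E{\Xtil}$ for every $i$, and $\Pr[i\in\wset] = \wqr/n$, so the right side collapses to $\wqr\,\E{\Xtil}$; equating the two expressions yields the claimed average. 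I would close by noting that, as in Theorem~\ref{thm:age_n1}, the argument relies only on the independence of successive service intervals and on instantaneous acknowledgements, so the formula holds for any i.i.d. service distribution $X$.
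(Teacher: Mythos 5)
Your proposal is correct, and its bookkeeping half --- substituting \eqref{eqn:EA_w1} and \eqref{eqn:EW2} into \eqref{eqn:age_raw_w1}, cancelling the common factor $\E{M}\E{Y}$, and reducing $\E{M^2}/(2\E{M})$ to $(2n-\wqr)/(2\wqr)$ via \eqref{eqn:EM} with $Y=X_{\wqr:n}$ --- coincides exactly with the paper's own proof. Where you genuinely diverge is on the step you rightly flag as the main obstacle, evaluating $\E{\Xtil}$. The paper conditions on the rank of node $i$: given $i\in\wset$, node $i$ is equally likely to hold each rank $r\in\{1,\dots,\wqr\}$, since $\Pr[j=i\,|\,i\in\wset]=(1/n)/(\wqr/n)=1/\wqr$ as in \eqref{eqn:cond_p_w1}, and then identifies the conditional service time at rank $r$ with $\E{X_{r:n}}$. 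That identification implicitly invokes the classical fact that for i.i.d.\ continuous service times the rank permutation is independent of the order-statistic values; in exchange it delivers more than the mean, namely the full conditional law of $\Xtil$ as a uniform mixture of $X_{1:n},\dots,X_{\wqr:n}$. Your route instead starts from the pathwise identity $\sum_{i\in\wset}X_i=\sum_{r=1}^{\wqr}X_{r:n}$, takes expectations, and decomposes the left side node by node, with exchangeability forcing a common value $\E{X_i\,|\,i\in\wset}=\E{\Xtil}$ and $\Pr[i\in\wset]=\wqr/n$, so the sum collapses to $\wqr\,\E{\Xtil}$. This needs only exchangeability --- no statement about conditional rank distributions --- so it is slightly more self-contained, at the cost of producing only the first moment, which is all the theorem requires. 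Both arguments, as you note at the end, are distribution-free given i.i.d.\ service times, zero-wait operation, and instantaneous acknowledgements, matching the paper's remark following the theorem.
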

\begin{proof}
It follows from \eqref{eqn:age_raw_w1}, \eqref{eqn:EA_w1} and \eqref{eqn:EW2} that
\begin{align}
\age_{(\wqr)} & = \E{\Xtil} + \frac{\E{M^2}\E{Y}}{2\E{M}} + \frac{1}{2} \frac{\Var{Y}}{\E{Y}} \nn
& = \E{X_j|j\in\wset} + \frac{\E{M^2} \E{X_{\wqr:n}}}{2\E{M}} + \frac{\Var{X_{\wqr:n}}}{2\E{X_{\wqr:n}}} \nn
& = \sum_{i=1}^{\wqr} \E{X_{i:n}} \Pr[j=i|i\in\wset] \nn
& \qquad \quad + \frac{\E{M^2} \E{X_{\wqr:n}}}{2\E{M}} + \frac{\Var{X_{\wqr:n}}}{2\E{X_{\wqr:n}}} \label{eqn:age_w1_proof3}.
\end{align}
In \eqref{eqn:age_w1_proof3}, the expected service time of a successful update $\text{E}\bigl[\Xtil\bigr]$ is obtained by averaging over the order statistics of service time $X_{i:n}$ from $i=1$ to $\wqr$. And the conditional probability
\begin{align}
\Pr[j=i|i\in\wset] = \frac{1/n}{\wqr/n} = \frac{1}{\wqr}. \label{eqn:cond_p_w1}
\end{align}
Substituting \eqref{eqn:EM} and \eqref{eqn:cond_p_w1} back to \eqref{eqn:age_w1_proof3} completes the proof.
\end{proof}
Similar to Theorem \ref{thm:age_n1}, Theorem \ref{thm:age_w1} is valid for any distribution of $X$ as long as the zero-wait policy is used and the update delivery acknowledgement is instantaneous. 
\begin{corollary}
Assuming $n$ is large and $n>\wqr$, we denote $\alpha=\wqr/n$.
For shifted exponential $(\lambda,c)$ service time $X$, the average age for earliest $\wqr$ stopping scheme can be approximated as
\begin{align*}
\age_{(\wqr)} \approx \hat{\age}(\alpha) =
\frac{1}{\lambda} - \frac{1}{2 \lambda} \log(1-\alpha) + \frac{\cf}{\alpha} + \frac{\cf}{2}. 
\end{align*}\label{thm:age_w1_approx}
\end{corollary}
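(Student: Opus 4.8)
The plan is to start from the exact expression in Theorem~\ref{thm:age_w1}, replace each order-statistic moment by its shifted-exponential value from \eqref{os_first} and \eqref{os_var}, and then let $n\to\infty$ with the ratio $\alpha=\wqr/n$ held fixed. I would treat the three summands of Theorem~\ref{thm:age_w1} separately: the averaged mean $\frac{1}{\wqr}\sum_{i=1}^{\wqr}\E{X_{i:n}}$, the weighted mean $\frac{2n-\wqr}{2\wqr}\E{X_{\wqr:n}}$, and the variance correction $\frac{\Var{X_{\wqr:n}}}{2\E{X_{\wqr:n}}}$. The single asymptotic tool used throughout is $H_n-H_{n-k}\approx\log\frac{n}{n-k}=-\log(1-k/n)$, which follows from $H_m\approx\log m+\gamma$ and is exactly the harmonic-number approximation already invoked after Corollary~\ref{thm:age_n1_exp}.

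The main obstacle is the first summand, since it averages the means of \emph{all} order statistics up to rank $\wqr$ rather than a single one. Writing $\E{X_{i:n}}=\cf+\frac1\lambda(H_n-H_{n-i})$ and applying the logarithmic approximation gives $\frac{1}{\wqr}\sum_{i=1}^{\wqr}\E{X_{i:n}}\approx \cf+\frac{1}{\lambda\wqr}\sum_{i=1}^{\wqr}\bigl(-\log(1-i/n)\bigr)$. I would then recognize the remaining sum as a Riemann sum: setting $x=i/n$ and using $\wqr=\alpha n$,
\[
\frac{1}{\wqr}\sum_{i=1}^{\wqr}\Bigl(-\log(1-\tfrac{i}{n})\Bigr)\;\xrightarrow{\,n\to\infty\,}\;\frac{1}{\alpha}\int_0^\alpha -\log(1-x)\,dx=1+\frac{(1-\alpha)\log(1-\alpha)}{\alpha},
\]
where the integral is evaluated by the substitution $u=1-x$. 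Thus the first summand tends to $\cf+\frac1\lambda+\frac{(1-\alpha)\log(1-\alpha)}{\lambda\alpha}$. An alternative avoiding integrals is to use the closed form $\sum_{m=1}^{N}H_m=(N+1)H_N-N$ and expand with $H_m\approx\log m+\gamma$; I expect the same limit but with more bookkeeping.

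For the second summand I would substitute $\E{X_{\wqr:n}}\approx \cf-\frac1\lambda\log(1-\alpha)$ together with $\frac{2n-\wqr}{2\wqr}=\frac1\alpha-\frac12$ and expand the product. The variance correction vanishes in the limit: by \eqref{os_var}, $\Var{X_{\wqr:n}}=\frac{1}{\lambda^2}(H_{n^2}-H_{(n-\wqr)^2})$, and since both $H_{n^2}$ and $H_{(n-\wqr)^2}$ converge to $\pi^2/6$ as $n,n-\wqr\to\infty$, their difference tends to $0$ while the denominator $2\E{X_{\wqr:n}}$ stays bounded away from $0$; this is the same observation used to simplify Corollary~\ref{thm:age_n1_exp}. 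Finally I would add the three limits and collect terms. The constants give $\frac{\cf}{\alpha}+\frac{\cf}{2}$ and the $\frac1\lambda$ terms give $\frac1\lambda$, while the $\log(1-\alpha)$ contributions combine: the two $\frac{1}{\lambda\alpha}$ pieces merge via $(1-\alpha)-1=-\alpha$ into $-\frac{1}{\lambda}\log(1-\alpha)$, which together with the leftover $\frac{1}{2\lambda}\log(1-\alpha)$ from the second summand yields exactly $-\frac{1}{2\lambda}\log(1-\alpha)$, giving $\hat{\age}(\alpha)$.
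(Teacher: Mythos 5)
Your proposal is correct, and it reaches every intermediate expression the paper does, but via a different key step for the dominant term. The paper handles $\frac{1}{\wqr}\sum_{i=1}^{\wqr}\E{X_{i:n}}$ \emph{exactly} first, using the summation identity $\sum_{i=1}^{k}H_i=(k+1)(H_{k+1}-1)$ to obtain the closed form $\delta_1=\cf+\frac{1}{\lambda}-\frac{n-\wqr}{\lambda\wqr}(H_n-H_{n-\wqr})$ valid for every finite $n$, and only then applies $H_m\approx\log m+\gamma$ once; you instead push the logarithmic approximation inside the sum term by term and evaluate the result as a Riemann sum, $\frac{1}{\wqr}\sum_{i=1}^{\wqr}\bigl(-\log(1-\tfrac{i}{n})\bigr)\to\frac{1}{\alpha}\int_0^\alpha-\log(1-x)\,dx=1+\frac{(1-\alpha)\log(1-\alpha)}{\alpha}$, which is exactly the limit of the paper's expression (your stated alternative via $\sum_{m=1}^{N}H_m=(N+1)H_N-N$ \emph{is} the paper's route, as this identity is equivalent to theirs). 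Your treatment of the second summand and of the vanishing variance correction, and your final recombination of the $\log(1-\alpha)$ coefficients into $-\frac{1}{2\lambda}\log(1-\alpha)$, coincide with the paper's $\delta_2$ computation and \eqref{eqn:var_over_E}. The trade-off: the paper's identity yields an exact finite-$n$ expression with a single approximation at the end, whereas your integral route is conceptually cleaner but tacitly interchanges approximation and summation --- this is harmless here because $n-i\ge(1-\alpha)n\to\infty$ uniformly over $i\le\wqr$ when $\alpha<1$, so the per-term error is uniformly $O(1/n)$; stating that one line would make your argument fully airtight (the paper is no more rigorous on this point). Note also your remark that the denominator $2\E{X_{\wqr:n}}$ stays bounded away from zero is a small addition the paper omits, and it holds even for $\cf=0$ since $\E{X_{\wqr:n}}\to-\frac{1}{\lambda}\log(1-\alpha)>0$ for fixed $\alpha\in(0,1)$.
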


\begin{proof}
For shifted exponentials $X$, we can rewrite the first term in Theorem \ref{thm:age_w1} as 
\begin{align}
\delta_1 & = \frac{1}{\wqr}\sum_{i=1}^{\wqr} \E{X_{i:n}} = \cf +\frac{1}{\wqr\lambda} \sum_{i=1}^{\wqr} (H_{n}-H_{n-i}).
\end{align}
Since the sum of Harmonic numbers has the the following series identity $\sum_{i=1}^{k} H_i = (k+1) (H_{k+1}-1)$, we have
\begin{align}
\delta_1 
& =  \cf + \frac{H_{n}}{\lambda} -  \frac{1}{\wqr\lambda} \left( \sum_{i=1}^{n-1} H_{i} - \sum_{i=1}^{n-\wqr-1} H_{i} \right) \nn
& =  \cf + \frac{H_{n}}{\lambda} -  \frac{1}{\wqr\lambda} \left( n (H_n -1) - (n-\wqr) (H_{n-\wqr} -1) \right) \nn
& =  \cf + \frac{1}{\lambda} - \frac{n-\wqr}{\lambda \wqr} (H_n-H_{n-\wqr}).
\end{align} 

With large $n$, we can again simply approximate the harmonic number by $H_i \approx \log i + \gamma $ and let $\wqr=\alpha n$. Thus, 
\begin{align}
\delta_1 & \approx  \cf + \frac{1}{\lambda} - \frac{n-\wqr}{\lambda \wqr} \log \left( \frac{n}{n-\wqr} \right) \nn
& = \cf + \frac{1}{\lambda} + \frac{1-\alpha}{\lambda \alpha} \log(1-\alpha). \label{eqn:delta_1}
\end{align}
Similarly, we denote the remaining terms in Theorem \ref{thm:age_w1} as
\begin{align}
\delta_2 
& = \frac{2n-\wqr}{2\wqr} \E{X_{\wqr:n}} + \frac{\Var{X_{\wqr:n}}}{2\E{X_{\wqr:n}}} \nn
& = \frac{2n-\wqr}{2\wqr} \left( \cf + \frac{H_{n}-H_{n-\wqr}}{\lambda}  \right) +  \frac{\Var{X_{\wqr:n}}}{2\E{X_{\wqr:n}}},
\end{align}
where the ratio 
\begin{align}
\frac{\Var{X_{\wqr:n}}}{\E{X_{\wqr:n}}} = \frac{H_{n^2}-H_{(n-\wqr)^2}}{2\lambda^2\cf + 2\lambda (H_n-H_{n-\wqr})}.
\end{align}
Remark that the sequence $H_{n^2}$ is monotonically increasing and it converges to $\pi^2/6$, thus $H_{n^2}-H_{(n-\wqr)^2}$ is relatively small and 
\begin{align}
\lim_{n\to\infty} \frac{\Var{X_{\wqr:n}}}{\E{X_{\wqr:n}}} = 0. \label{eqn:var_over_E}
\end{align}
This also implies $\E{X^2_{\wqr:n}}\approx (\E{X_{\wqr:n}})^2$ when $n$ is large.
Thus,
\begin{align}
\delta_2 
& \approx \frac{(2n-\wqr)\cf}{2\wqr} + \frac{2n-\wqr}{2\wqr\lambda} \left( H_{n}-H_{n-\wqr} \right) \nn
& \approx \frac{(2-\alpha)\cf}{2\alpha} + \frac{\alpha-2}{2\alpha\lambda} \log(1-\alpha). \label{eqn:delta_2}
\end{align}
The logarithm approximation of the average age is then followed by the summation of \eqref{eqn:delta_1} and \eqref{eqn:delta_2}.
\end{proof}
Note that this log approximation is tight only if $\wqr<n$, i.e. $\alpha<1$. As $\alpha$ approaches $1$, $\log n - \log(n-k) = -\log(1-\alpha)$ becomes unbounded, which differs from $H_n-H_{n-\wqr} = H_n$.

When $\cf=0$, the service time $X$ is exponential, and age is an increasing function of $\alpha$  for $\alpha\in(0,1)$ in Corollary~\ref{thm:age_w1_approx}.
Thus $\wqr=1$ is age minimized, implying that the source should send out a new update as soon as one node receives the current update.
This can be explained by the memoryless property of exponential random variables. 
For memoryless service, the residual service time required to deliver the current update to additional nodes is identical to the service time required to deliver a fresh update. 
However, the subsequent delivery of the fresh update will yield a larger reduction in age.
Thus the average age may be reduced if the wait for the current update is replaced by the wait for a new update. 

When $\cf>0$, by taking the derivative of Corollary \ref{thm:age_w1_approx} we can show that there exists a local minimum $\hat{\age}^* = \hat{\age}(\alpha^*) \leq \hat{\age}(\alpha)$ for $\alpha \in (0,1)$ in Corollary~\ref{thm:age_w1_approx}, which is given by  
\begin{align}
\alpha^* & = \sqrt[]{\lambda^2 \cf^2 + 2\lambda\cf} - \lambda\cf. \label{opt_alpha}
\end{align}
We note that given the number of nodes $n$, the optimal ratio $\alpha^*$ only depends on the product $\lambda \cf$. 
Given the number of nodes $n$, the near-optimal $\wqr^*$ can be obtained by rounding $\alpha^* n$ to the nearest integer, i.e. $\wqr^* = [\alpha^* n$].

\subsection{Stopping at Pre-selected $\wqr$}

In contrast to the earliest $\wqr$ stopping scheme which exploits all the feedback channels back to the source, we also provide the analysis for an alternative scheme, in which the source selected $\wqr$ of $n$ nodes randomly in advance and only waits for the response from these pre-selected $\wqr$ nodes.
For every update packet, a node is randomly pre-selected to the group with probability $p = k/n$.
The age process is similar to earliest $\wqr$ scheme as shown in Figure \ref{fig:sawtooth_w_1}, except that a particular user $i$ has a different probability $p^S$ for receiving an update, and the random service interval is reduced to $X_{\wqr:\wqr}$ since the source waits for the targeted $\wqr$ nodes. 
Similarly, we have the following theorem
\begin{theorem}
For pre-selected $\wqr$ stopping scheme, the average age at an individual node is 
\begin{align*}
\age^S_{(\wqr)} & = \frac{\wqr}{n}\E{X} + \frac{n-\wqr}{\wqr n}\sum_{i=1}^{\wqr}\E{X_{i:\wqr+1}} \nn
& \qquad \qquad + \frac{2n-k+nk}{2(k+nk)} \E{X_{\wqr:\wqr}} + \frac{ \Var{X^2_{\wqr:\wqr}}}{2\E{X_{\wqr:\wqr}}}. 
\end{align*}
\label{thm:age_w1S}
\end{theorem}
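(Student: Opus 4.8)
The plan is to reuse the renewal-reward skeleton developed for the earliest~$\wqr$ scheme in the proof of Theorem~\ref{thm:age_w1}. That derivation uses only three ingredients: the expected successful service time $\E{\Xtil}$, the moments of the geometric inter-delivery count $M$ via \eqref{eqn:EM}, and the first two moments of the service interval $Y$. Everything downstream --- the trapezoid area \eqref{eqn:EA_w1}, the sum-of-intervals second moment \eqref{eqn:EW2}, and the assembled expression $\age = \E{\Xtil} + \frac{\E{M^2}\E{Y}}{2\E{M}} + \frac{\Var{Y}}{2\E{Y}}$ --- carries over once these three ingredients are recomputed. Here the service interval shrinks to $Y = X_{\wqr:\wqr}$, the maximum over only the pre-selected links, so $\E{Y} = \E{X_{\wqr:\wqr}}$ and $\Var{Y} = \Var{X_{\wqr:\wqr}}$ follow directly from \eqref{os_first}--\eqref{os_var} with $n$ replaced by $\wqr$. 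The only genuinely new work is the per-interval reception probability $p^S$ and the expected age-drop $\E{\Xtil}$.

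The key observation I would make is that although the source waits only on the pre-selected group, the update is multicast to all $n$ nodes, so a non-selected node $i$ still captures the update whenever its own service time undercuts the group maximum $X_{\wqr:\wqr}$. Thus node $i$ succeeds in an interval either (i) by being selected, which happens with probability $\wqr/n$ and then guarantees delivery, or (ii) by not being selected (probability $(n-\wqr)/n$) yet finishing before the slowest selected link. Conditioned on case~(ii), node $i$ and the $\wqr$ group members form $\wqr+1$ i.i.d.\ service times, and by exchangeability node $i$ is the slowest with probability $1/(\wqr+1)$, hence captured with probability $\wqr/(\wqr+1)$. Combining,
\[
p^S = \frac{\wqr}{n} + \frac{n-\wqr}{n}\cdot\frac{\wqr}{\wqr+1} = \frac{\wqr(n+1)}{n(\wqr+1)},
\]
which, substituted into \eqref{eqn:EM}, produces the coefficient $\frac{2n-\wqr+n\wqr}{2(\wqr+n\wqr)}$ multiplying $\E{X_{\wqr:\wqr}}$ in the third term.

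For $\E{\Xtil}$ I would condition on the same two cases, the age-drop value being node $i$'s service time at the capturing interval. In case~(i) a selected node contributes an unconditioned $\E{X}$; in case~(ii) node $i$ is uniformly one of the $\wqr$ smallest of the $\wqr+1$ exchangeable variables, contributing the average $\frac{1}{\wqr}\sum_{i=1}^{\wqr}\E{X_{i:\wqr+1}}$ of the corresponding order statistics. Weighting these by the selection and non-selection probabilities $\wqr/n$ and $(n-\wqr)/n$ recovers the stated first two terms $\frac{\wqr}{n}\E{X} + \frac{n-\wqr}{\wqr n}\sum_{i=1}^{\wqr}\E{X_{i:\wqr+1}}$; assembling this with the geometric term and the variance term $\frac{\Var{X_{\wqr:\wqr}}}{2\E{X_{\wqr:\wqr}}}$ gives the claimed formula.

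The main obstacle --- and the step I would treat most carefully --- is the coupling between node $i$'s capture event and the service interval $Y$ it lives in. Because capture in case~(ii) forces $X_i < X_{\wqr:\wqr}$ and capture-by-selection places $X_i$ inside the very set defining $X_{\wqr:\wqr}$, the variable $\Xtil$ is not independent of its interval, and intervals in which node $i$ \emph{fails} carry a different conditional law for $Y$ (the group maximum then equals the second largest of $\wqr+1$ values). The skeleton of Theorem~\ref{thm:age_w1} implicitly treats $\Xtil$, $M$, and the $Y_l$ as mutually independent and identically distributed across intervals, so I would either (a) verify that the cross-correlation terms arising in $\E{W^2}$ and $\E{W\Xtil}$ cancel in the renewal-reward ratio exactly as they do in Theorem~\ref{thm:age_w1}, or (b) state inter-interval independence as the operating assumption under which the result holds, mirroring the remark following Theorem~\ref{thm:age_w1}. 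Reconciling the case-weighting in $\E{\Xtil}$ with this independence convention is where the delicate bookkeeping resides, and it is the one place I would expect the computation to demand explicit justification.
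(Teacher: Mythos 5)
Your proposal matches the paper's own proof essentially step for step: the same renewal-reward identity $\age^S_{(\wqr)} = \E{\Xtil} + \E{M^2}\E{Y}/(2\E{M}) + \Var{Y}/(2\E{Y})$ with $Y = X_{\wqr:\wqr}$, the same two-case computation giving $p^S = \frac{\wqr}{n} + \frac{n-\wqr}{n}\cdot\frac{\wqr}{\wqr+1}$ and hence the coefficient $\frac{2n-\wqr+n\wqr}{2(\wqr+n\wqr)}$, and the same selected/non-selected conditioning yielding $\E{\Xtil} = \frac{\wqr}{n}\E{X} + \frac{n-\wqr}{\wqr n}\sum_{i=1}^{\wqr}\E{X_{i:\wqr+1}}$. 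The coupling between $\Xtil$, $M$, and the $Y_l$ that you flag in your final paragraph is not treated in the paper either---its proof simply invokes \eqref{eqn:age_raw_S} without justifying the independence conventions---so that caution is an addition to, not a departure from, the published argument.
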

\begin{proof}
Since the age process follows similar patterns as the earliest $\wqr$ stopping scheme, the average can be obtained similarly from \eqref{eqn:age_raw_w1} and \eqref{eqn:EA_w1} as 
\begin{align}
\age^S_{(\wqr)} & = \E{\Xtil} + \frac{\E{M^2}\E{Y}}{2\E{M}} + \frac{1}{2} \frac{\Var{Y}}{\E{Y}}, \label{eqn:age_raw_S}
\end{align}
where the expected service interval $\E{Y} = \E{X_{k:k}}$ since the source only waits for the group of $k$ nodes.
Note that $\Xtil$ is now the service time given an individual node successfully receives the update, and $M$ is the geometric random variable with a different success probability $p^S$.

We again denote the set of pre-selected nodes as $\mathcal{K}$. 
If a particular node $i$ is pre-selected in the group of $\wqr$ nodes, this node receives the current update with probability $1$. 
On the other hand, if node $i$ is not in the pre-selected group, the link delay for node $i$ has to be smaller than the largest link delay in the pre-selected group.
Hence, the probability that an individual node $i$ gets the update is
\begin{align}
p^S & = p + (1-p) \Pr\{X_i<X_j, \; \exists j\in\mathcal{K}\} \nn
& = \frac{\wqr}{n} + \frac{n-\wqr}{n} \frac{\wqr}{1+\wqr}. \label{eqn:ps}
\end{align}
It follows from \eqref{eqn:EM} and \eqref{eqn:ps} that 
\begin{align}
\frac{\E{M^2}}{2\E{M}} & = \frac{2-p^S}{2p^S} \nn
& = \left(\frac{1}{2}\right) \frac{n+(n-\wqr)/(1+\wqr)}{n-(n-\wqr)/(1+\wqr)} \nn
& = \frac{2n-k+nk}{2(k+nk)} . \label{eqn:EMS}
\end{align}
Similarly, if node $i$ is in the pre-selected group, the conditional service time $\Xtil$ given that node $i$ gets the update is simply the service time $X_i$ since the delays are i.i.d. 
If node $i$ is not in the group, we compare the delay $X_i$ to the set of delay $\{X_j, \, j\in\mathcal{K}\}$. Given that node $i$ receives the update, $X_i$ cannot be the largest among all $k+1$ service times.
Thus,
\begin{align}
\E{\Xtil} & = p \E{\Xtil_i\,|\,i\in\mathcal{K}} + (1-p) \E{\Xtil_i\,|\,i \notin \mathcal{K}} \nn
& = \frac{\wqr}{n} \E{X} + \frac{n-\wqr}{n} \left(\frac{1}{\wqr} \sum_{i=1}^{\wqr} \E{X_{i:k+1}} \right). \label{eqn:EXtilS}
\end{align}
Substituting \eqref{eqn:EMS} and \eqref{eqn:EXtilS} back to \eqref{eqn:age_raw_S} completes the proof.
\end{proof}

\begin{corollary}
For shifted exponential $(\lambda,\cf)$ service time $X$, the average age for pre-selected $\wqr$ scheme is approximated by
\begin{align*}
\age^S_{(\wqr)} & \approx \cf+\frac{1}{\lambda}+\frac{n-\wqr}{\lambda\wqr n}(H_{\wqr+1}-1) \nn
& \qquad \qquad + \frac{2n-k+nk}{2(k+nk)} \left(\cf+\frac{H_k}{\lambda}\right). 
\end{align*}
\label{thm:age_w1S_approx}
\end{corollary}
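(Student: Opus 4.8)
The plan is to start from the exact expression in Theorem~\ref{thm:age_w1S} and substitute the shifted-exponential moments term by term, keeping the harmonic numbers intact and discarding only the asymptotically negligible variance-to-mean ratio. Note that, unlike Corollary~\ref{thm:age_w1_approx}, no logarithmic approximation of the harmonic numbers is needed here: the sole approximation is dropping the final summand, so the bulk of the argument is exact algebra.

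First I would substitute the elementary moments. For shifted exponential $X$ we have $\E{X} = \cf + 1/\lambda$; the order-statistic mean from \eqref{os_first} applied to a sample of size $\wqr+1$ gives $\E{X_{i:\wqr+1}} = \cf + \frac{1}{\lambda}(H_{\wqr+1} - H_{\wqr+1-i})$; and the $k=n=\wqr$ case of \eqref{os_first} (the maximum of $\wqr$ samples) gives $\E{X_{\wqr:\wqr}} = \cf + H_{\wqr}/\lambda$. This last expression already matches the third summand $\frac{2n-\wqr+n\wqr}{2(\wqr+n\wqr)}\bigl(\cf + H_\wqr/\lambda\bigr)$ of the claimed approximation verbatim, so that term needs no further manipulation.

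The core step is evaluating $\sum_{i=1}^{\wqr}\E{X_{i:\wqr+1}}$ appearing in the second term. After reindexing $m = \wqr+1-i$, its harmonic contribution becomes $\wqr H_{\wqr+1} - \sum_{m=1}^{\wqr}H_m$, and invoking the series identity $\sum_{m=1}^{\wqr}H_m = (\wqr+1)(H_{\wqr+1}-1)$ (the same identity used in the proof of Corollary~\ref{thm:age_w1_approx}) collapses this into a closed form in $H_{\wqr+1}$. I would then combine the first two terms of Theorem~\ref{thm:age_w1S}: the shift contributions $\frac{\wqr}{n}\cf$ and $\frac{n-\wqr}{n}\cf$ add to $\cf$, while the $1/\lambda$ contributions, after simplifying the coefficient $\frac{\wqr^2 + (n-\wqr)(\wqr+1)}{\wqr n} = 1 + \frac{n-\wqr}{\wqr n}$, produce $\frac{1}{\lambda}$ together with a term proportional to $\frac{n-\wqr}{\lambda\wqr n}(H_{\wqr+1}-1)$, recovering the first line of the stated formula.

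Finally, I would justify dropping the fourth summand $\frac{\Var{X_{\wqr:\wqr}}}{2\E{X_{\wqr:\wqr}}}$ by exactly the reasoning behind \eqref{eqn:var_over_E}: by \eqref{os_var} with $k=n=\wqr$ we have $\Var{X_{\wqr:\wqr}} = H_{\wqr^2}/\lambda^2$, which is bounded above by $\pi^2/(6\lambda^2)$, whereas $\E{X_{\wqr:\wqr}} = \cf + H_\wqr/\lambda$ grows without bound, so the ratio vanishes as $\wqr$ (hence $n$) grows. The main obstacle I anticipate is the bookkeeping in the combining step, specifically pinning down the coefficient and, above all, the \emph{sign} of the $(H_{\wqr+1}-1)$ term: this is precisely where the reindexed harmonic sum interacts with the simplification of $\frac{\wqr^2+(n-\wqr)(\wqr+1)}{\wqr n}$, and a naive evaluation can easily flip the sign, so I would verify it against a small case such as $\wqr=1$ before concluding.
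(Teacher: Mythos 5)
Your route is exactly the paper's: the paper's entire proof of Corollary~\ref{thm:age_w1S_approx} is the one-line remark that it ``follows directly'' by substituting the shifted-exponential order statistics \eqref{os_first}--\eqref{os_second} into Theorem~\ref{thm:age_w1S} and discarding the variance-to-mean term via the reasoning of \eqref{eqn:var_over_E}, and your proposal supplies precisely those details --- the reindexing $m=\wqr+1-i$, the identity $\sum_{m=1}^{\wqr}H_m=(\wqr+1)(H_{\wqr+1}-1)$, the third term matching $\E{X_{\wqr:\wqr}}=\cf+H_{\wqr}/\lambda$ verbatim, and the bound $\Var{X_{\wqr:\wqr}}=H_{\wqr^2}/\lambda^2\leq\pi^2/(6\lambda^2)$ against the unboundedly growing mean. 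All of that structure is sound, and your observation that no logarithmic approximation of harmonic numbers is needed here is also correct.

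However, the combining step that you yourself flag as the danger point does not come out the way you assert. Executing your own reindexing gives $\sum_{i=1}^{\wqr}\bigl(H_{\wqr+1}-H_{\wqr+1-i}\bigr)=\wqr H_{\wqr+1}-(\wqr+1)(H_{\wqr+1}-1)=(\wqr+1)-H_{\wqr+1}=\wqr-(H_{\wqr+1}-1)$, so the first two terms of Theorem~\ref{thm:age_w1S} combine to $\cf+\frac{1}{\lambda}-\frac{n-\wqr}{\lambda\wqr n}(H_{\wqr+1}-1)$: the harmonic term enters with a \emph{minus} sign, not the plus sign of the stated formula, so your claim of ``recovering the first line of the stated formula'' is false as written. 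Your proposed sanity check settles this immediately: for $n=2$, $\wqr=1$, the first two terms of Theorem~\ref{thm:age_w1S} are $\frac{1}{2}\E{X}+\frac{1}{2}\E{X_{1:2}}=\cf+\frac{3}{4\lambda}$, agreeing with the minus-sign expression, whereas the statement's first line evaluates to $\cf+\frac{5}{4\lambda}$. There is also a structural reason the minus sign must be right: the first line of the corollary is exactly $\E{\Xtil}$ from \eqref{eqn:EXtilS}, and conditioning a non-selected node's delay on not being the largest of $\wqr+1$ samples can only pull $\E{\Xtil}$ \emph{below} $\E{X}=\cf+1/\lambda$, which the plus-sign version violates. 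In short, a correct execution of your plan proves the corollary with $-\frac{n-\wqr}{\lambda\wqr n}(H_{\wqr+1}-1)$; the printed statement carries a sign typo, and you should report that discrepancy rather than assert agreement with it.
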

Corollary \ref{thm:age_w1S_approx} follows directly by substituting the order statistics of shifted exponential r.v.s in \eqref{os_first} and \eqref{os_second}, and exploiting the property in \eqref{eqn:var_over_E} that $\Var{X_{k:n}}/\E{X_{k:n}}$ is negligible.

\section{Evaluation} \label{sec:evaluation}

\begin{figure}[t]
\centering
\includegraphics[width=0.5\textwidth]{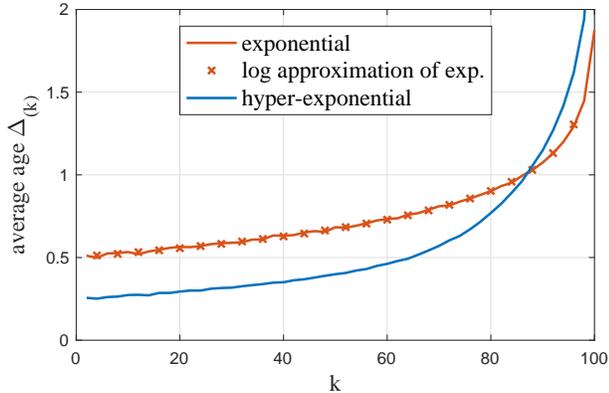}
\caption{Average age as a function of stopping threshold $\wqr$ using earliest $k$ scheme. Each link experiences exponential or hyper-exponential delay with the same expectation. $\times$ marks the log approximation of the average age for exponential service. }
\label{fig:exp}
\end{figure}

\begin{figure}[pt]
\centering
\includegraphics[width=0.5\textwidth]{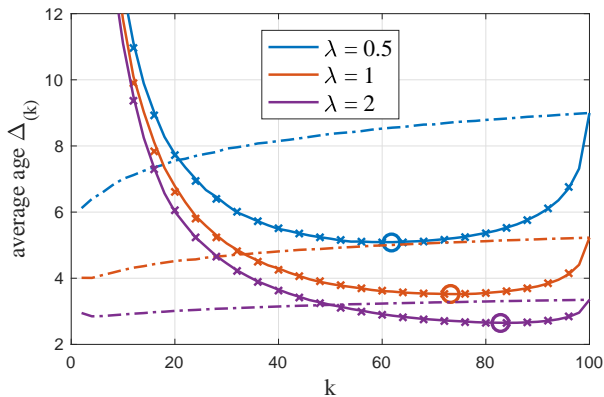}
\caption{Average age as a function of the stopping threshold $\wqr$ for shifted exponential service time. Solid and dash lines mark the average age for earliest $\wqr$ and pre-selected $\wqr$ stopping schemes respectively. $\times$ marks the approximation of the average age for earliest $k$ scheme. $\circ$ marks the minimized approximate age $\hat{\age}(\wqr^*)$.}
\label{fig:r1}
\negfigspace
\end{figure}

\begin{figure}[pt]
\centering
\includegraphics[width=0.5\textwidth]{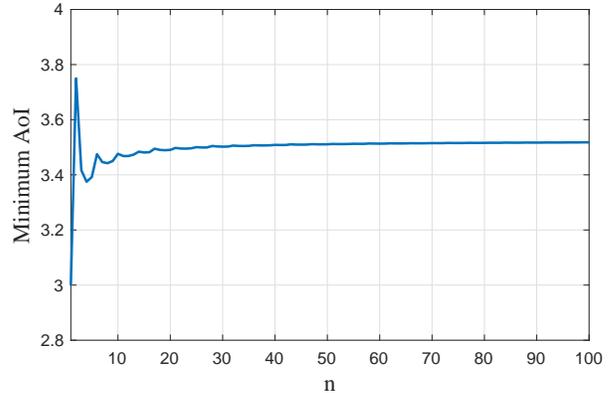}
\caption{Minimum average age obtained by optimizing the threshold $\wqr$ for different node numbers $n$ in the system with $\lambda=1$ and $\cf=1$.}
\label{fig:add_user}
\negfigspace
\end{figure}

Fig.~\ref{fig:exp} depicts the simulation of the average age $\age_{(\wqr)}$ as a function of the stopping threshold $\wqr$ using earliest k stopping scheme with $n=100$ nodes.
We compare the average age for exponential service time with rate $\lambda=2$, and hyper-exponential service time, which is the mixture of two exponentials with rates $\lambda_1=1$ and $
\lambda_2=6$ and $\lambda_1=1$ occurs with probability $p=0.4$. 
For $\cf=0$, Corollary~\ref{thm:age_w1_approx} becomes 
\begin{align*}
\hat{\age}(\alpha)~=~\frac{1}{\lambda}+\frac{\log(1-\alpha)}{2\lambda},
\end{align*}
which implies that the normalized age $\lambda \age_{(\wqr)}$ is almost independent of the rate $\lambda$.
We observe that the average age increases monotonically as $\wqr$ increases for both exponential and hyper-exponential $X$, while the logarithmic term in Corollary~\ref{thm:age_w1_approx} provides a tight approximation to the true average age with exponential service time.
We also observe in Fig.~\ref{fig:exp} that the hyper-exponential service results in a steeper curves than the exponential, mainly because of the {\it log-convex} behavior of the hyper-exponential distribution.
For log-convex $X$, the tail probability satisfies $\Pr(X>x+t|X>t) \geq \Pr(X >x)$, which indicates that the more you wait for the completion of event $X$, the more likely you will wait even longer. 
Thus, having a smaller stopping threshold  will preempt the current update earlier, which lowers the average age. Thus the optimal strategy is to stop waiting as early as possible, which yields $\wqr^*=1$.

Similarly, Figure \ref{fig:r1} compares the average age $\age_{(\wqr)}$ for the earliest $k$ and pre-selected $k$ stopping schemes, where each link has shifted exponential transmission time with time shift $c=1$ and different rate $\lambda$. 
The logarithmic approximation in Corollary~\ref{thm:age_w1_approx} for earliest $k$ scheme is marked with $\times$, and the near-optimal stopping threshold  $\wqr^*$ is marked with $\circ$. 
We observed that for a given number of nodes $n$, and the optimal stopping threshold  $\wqr^*$ increases as the product $\lambda c$ increases, which also indicates the minimum of the actual average age obtained by simulation.  
For instance, when $\lambda c=0.5$, the optimal strategy is to wait for more than 60 responses from the nodes before sending a new update.

We also observe that the earliest $\wqr$ scheme provides lower average age than the pre-selected $\wqr$ scheme by choosing the threshold $\wqr$ properly, and both schemes give the same average age when $\wqr=n$. 
For pre-selected $\wqr$ scheme with large $n$ and very small threshold $\wqr$, e.g. $\wqr=1$, an individual node is very unlikely to be pre-selected in the group of $k=1$ node, and this node has about half of the chance to be faster than the chosen node and gets an update. That is, the source will receive responses from around half of the $n$ nodes while it's waiting for the pre-selected node.
As $\wqr$ increases, the probability that a particular node is pre-selected is increased, and every unselected node has larger probability to be faster than the slowest node in the pre-selected group. This implies that the source actually waits for significantly more nodes than $\wqr$ as the threshold $\wqr$ increases for the pre-selected $\wqr$ scheme.

Figure \ref{fig:add_user} depicts the average age at an individual node as the total number of nodes $n$ in the system increases. For every $n$, a corresponding age-minimized stopping threshold $k^*$ is obtained by \eqref{opt_alpha}. 
When there is only one node in the system, we observe that the average age is exactly at $3(\cf+\frac{1}{\lambda})/2=3$. 
The average age fluctuates for small $n$ since the earliest $\wqr$ scheme is a heuristic by waiting for an integer number of users $\wqr$. 
As the number of users $n$ grows, the average age converges rapidly, which matches our analysis in Corollary \ref{thm:age_w1_approx} that the average age depends only on $\lambda$ and $\cf$ when $n$ is large enough. 
This implies the effect of increasing number of users in a share network on the average age can be eliminated by choosing the multicast strategy properly.

\section{Conclusion and Future Work} \label{sec:conclusion}
We have examined a status updates multicast system in which real-time status update messages are replicated and sent to a set of nodes.
The source preempts the current update with a new update as soon as it receives the acknowledgements back from any $\wqr$ out of $n$ nodes.
The freshness of the updating system is measured by the time-averaged age at each single node.
As the stopping threshold $\wqr$ increases, each node is more likely to get the most recent update generated by the source. 
However, the age of the content, which represents the freshness in time, also increases as $\wqr$ increases for exponential link delay.
For shifted exponential service model, we have derived the age optimized stopping threshold $\wqr$ that balances the likelihood of getting the last generated update and the freshness of that update message.  
We have shown numerically that the earliest $\wqr$ stopping scheme outperforms the pre-selected $\wqr$ scheme by choosing the age minimized threshold $\wqr$. 
The observations for different services distributions also lead us to believe that the tradeoff between the stopping threshold and the age of information is dictated by the heaviness of the tail of the service time distribution.

In this work, we assumed the feedback channels from the receivers are instantaneous, and a more generalized model with limited feedback from the receivers will be addressed in future work.
Our results are limited to the scenario where the update transmission is incremental in time such that the transmission can be terminated at any time, but we are also interested in an alternative scenario where new packets are queued in the network because old packets are still in the service.
A extensive characteristic of the tradeoff between network resources and age of information is also of our interest.

\section*{Acknowledgment}
\addcontentsline{toc}{section}{Acknowledgment}

The authors would like to thank Prof. Sennur Ulukus and her research group at University of Maryland for constructive feedback of the manuscript. 

Part of this research is based upon work supported by the National Science Foundation under Grant No.~CIF~-~1422988.

\balance

\bibliographystyle{IEEEtran}
\bibliography{ref}

\begin{thebibliography}{10}
\providecommand{\url}[1]{#1}
\csname url@samestyle\endcsname
\providecommand{\newblock}{\relax}
\providecommand{\bibinfo}[2]{#2}
\providecommand{\BIBentrySTDinterwordspacing}{\spaceskip=0pt\relax}
\providecommand{\BIBentryALTinterwordstretchfactor}{4}
\providecommand{\BIBentryALTinterwordspacing}{\spaceskip=\fontdimen2\font plus
\BIBentryALTinterwordstretchfactor\fontdimen3\font minus
  \fontdimen4\font\relax}
\providecommand{\BIBforeignlanguage}[2]{{%
\expandafter\ifx\csname l@#1\endcsname\relax
\typeout{** WARNING: IEEEtran.bst: No hyphenation pattern has been}%
\typeout{** loaded for the language `#1'. Using the pattern for}%
\typeout{** the default language instead.}%
\else
\language=\csname l@#1\endcsname
\fi
#2}}
\providecommand{\BIBdecl}{\relax}
\BIBdecl

\bibitem{Kaul2012infocom}
S.~Kaul, R.~D. Yates, and M.~Gruteser, ``Real-time status: How often should one
  update?'' in \emph{Proc.\ INFOCOM}, Apr. 2012, pp. 2731--2735.

\bibitem{Costa2014}
M.~Costa, M.~Codreanu, and A.~Ephremides, ``{Age of information with packet
  management},'' in \emph{Proc. IEEE Int.\ Symp.\ Inform.\ Theory}, 2014, pp.
  1583--1587.

\bibitem{Huang2015}
L.~Huang and E.~Modiano, ``{Optimizing age-of-information in a multi-class
  queueing system},'' in \emph{Proc. IEEE Int.\ Symp.\ Inform.\ Theory}, Jun.
  2015, pp. 1681--1685.

\bibitem{Sun2016}
Y.~Sun, E.~Uysal-Biyikoglu, R.~Yates, C.~E. Koksal, and N.~B. Shroff, ``Update
  or wait: How to keep your data fresh,'' in \emph{Proc.\ INFOCOM}, 2016.

\bibitem{Najm2017}
E.~Najm, R.~Yates, and E.~Soljanin, ``Status updates through m/g/1/1 queues
  with harq,'' \emph{arXiv preprint arXiv:1704.03937}, 2017.

\bibitem{Bedewy2016}
A.~M. Bedewy, Y.~Sun, and N.~B. Shroff, ``{Optimizing data freshness,
  throughput, and delay in multi-server information-update systems.}''
  \emph{Proc. IEEE Int.\ Symp.\ Inform.\ Theory}, 2016.

\bibitem{Kadota2016}
I.~Kadota, E.~Uysal-Biyikoglu, R.~Singh, and E.~Modiano, ``{Minimizing the Age
  of Information in broadcast wireless networks.}'' \emph{Proc.\ Allerton
  Conf.\ on Commun., Control and Computing}, pp. 844--851, 2016.

\bibitem{Yates2017}
R.~D. Yates, E.~Najm, E.~Soljanin, and J.~Zhong, ``Timely updates over an
  erasure channel,'' \emph{arXiv preprint arXiv:1704.04155}, 2017.

\bibitem{Lecompte2012}
D.~Lecompte and F.~Gabin, ``{Evolved multimedia broadcast/multicast service
  (eMBMS) in LTE-advanced: overview and Rel-11 enhancements},'' \emph{IEEE
  Comm.\ Magazine}, vol.~50, no.~11, 2012.

\bibitem{Lentisco}
C.~M. Lentisco, L.~Bellido, and E.~Pastor, ``{Reducing latency for multimedia
  broadcast services over mobile networks},'' \emph{IEEE Transactions on
  Multimedia}, vol.~19, no.~1, pp. 173--182, 2017.

\bibitem{HeindlmaierS14}
M.~Heindlmaier and E.~Soljanin, ``Isn't hybrid {ARQ} sufficient?'' in
  \emph{Proc.\ Allerton Conf.\ on Commun., Control and Computing}, 2014, pp.
  563--568.

\end{thebibliography}

\end{document}